\newtheorem{proposition}{Proposition}
\newenvironment{proof}{\noindent\textit{Proof~~}}
{\nolinebreak[4]\hfill$\square$\\\par}
\DeclareMathOperator*{\argmax}{\arg\!\max}
\title{Remarks on quantum duopoly schemes }
\author{Piotr Fr\c{a}ckiewicz\\
\small Institute of Mathematics, Pomeranian University\\ \small 76-200 S\l upsk, Poland\\ \small P.Frackiewicz@impan.gov.pl}
\begin{document}
\maketitle
\begin{abstract}
The aim of this paper is to discuss in some detail the two different quantum schemes for duopoly problems. We investigate under what conditions one of the schemes is more reasonable that the other one. Using the Cournot's duopoly example we show that the current quantum schemes require a slight refinement so that they output the classical game in a particular case. Then we show how the amendment changes the way of studying the quantum games with respect to Nash equilibria. Finally, we define another scheme for the Cournot's duopoly in terms of quantum computation.
\end{abstract}
\section{Introduction}
\label{intro}
Quantum game theory, an interdisciplinary field that combines quantum theory and game theory, has been investigated for fifteen years. The first attempt to describe a game in the quantum domain applied to finite noncooperative games in the normal form \cite{mejer}, \cite{ewl}, \cite{marinatto}. The general idea (in the case of bimatrix games) was based on identifying the possible results of the game with the basis states $|ij\rangle \in \mathbb{C}^{n}\otimes \mathbb{C}^{m}$. Soon after quantum theory has also found an application in duopoly problems \cite{iqbalbackwards}, \cite{du}. It has been a challenging task as the players' strategy sets in the duopoly examples are (real) intervals and therefore there are continuum of possible game results. The scheme presented in \cite{iqbalbackwards} adapts the Marinatto-Weber quantum $2\times 2$ game scheme \cite{marinatto} to the Stackelberg duopoly example. The model relates actions in the duopoly with the probabilities of applying the bit-flip operators. On the other hand, the model introduced in \cite{du} is a new framework compared with the quantum $2\times 2$ game schemes. It was defined to consider the Cournot duopoly problem where each strategy in the classical game corresponds to a specific unitary operator. The model entangles the players' quantities and the relation depends on a degree of the entanglement. 

The Iqbal-Toor \cite{iqbalbackwards} and Li-Du-Massar \cite{du} schemes undoubtedly brought new ideas to the field of quantum games. These remarkable schemes have found an application in many duopoly problems. The former scheme was further investigated, for example, in \cite{zhu}, \cite{zhu2}, \cite{bertrand}, \cite{bertrand2}, \cite{doublekhan}, the latter one in \cite{1.1}, \cite{2.1} \cite{3.1}, \cite{3.3}, \cite{3.4}, \cite{4.1}, \cite{5}, \cite{6}, \cite{7}, \cite{8.1}, \cite{8.2}, \cite{8.3}.
The aim of the paper is to pay attention to some properties of the quantum duopoly schemes that might be thought unsuitable (in the case of the Iqbal-Toor scheme) and specify the Nash equilibrium analysis (in the case of the Li-Du-Massar scheme). 
\section{Cournot's duopoly model}
We recall the version of the Cournot model \cite{cournot} with two firms that have been the subject of research in the quantum domain. Based on \cite{peters}, firm 1 and firm 2 offer quantities $q_{1}$ and $q_{2}$, respectively, of a homogeneous product. The price of the product depends on the total quantity $q_{1} + q_{2}$. The higher the quantity is, the lower the price of the product. More formally, the Cournot's duopoly model defines a strategic form game $(N, \{S_{i}\}_{i\in N}, \{u_{i}\}_{i\in N})$, where 
\begin{enumerate}
\item $N=\{1,2\}$ is a set of players, \item $S_{i} = [0,\infty)$ is a player $i$'s strategy set, \item $u_{i}$ is a player $i$'s payoff function given by formula 
\begin{equation}\label{classicpayoff}
u_{i}(q_{1}, q_{2}) = q_{i}P(q_{1}, q_{2}) - cq_{i}  ~~\mbox{for}~~ q_{1}, q_{2} \in [0,\infty).
\end{equation}
Here $P(q_{1}, q_{2})$ represents the market price of the product, 
\begin{equation}\label{classicprice}
P(q_{1}, q_{2}) = \begin{cases}  a-q_{1} - q_{2} &\mbox{if}~ q_{1} + q_{2} \leqslant a \\ 0 &\mbox{if}~ q_{1} + q_{2} > a, \end{cases}
\end{equation}
and $c$ is a marginal cost with $a>c > 0.$
\end{enumerate}

A Nash equilibrium is the most commonly used solution concept to study duopoly examples. It is defined as a profile of strategies of all players in which each strategy is a best response to the other strategies. In view of the Cournot duopoly it is a strategy profile $(q^*_{1}, q^*_{2})$ that satisfies the following system of inequalities:
\begin{equation}
\begin{cases} u_{1}(q^*_{1}, q^*_{2}) \geqslant u_{1}(q_{1}, q^*_{2}) \\  u_{2}(q^*_{1}, q^*_{2}) \geqslant u_{2}(q^*_{1}, q_{2})\end{cases} ~\mbox{for all}~q_{1}, q_{2} \in [0,\infty).
\end{equation}
It implies that the game has a unique Nash equilibrium $(q^*_{1}, q^*_{2}) = ((a-c)/3, (a-c)/3)$ with the equilibrium payoff $(a-c)^2/9$ for each player.

It is appropriate at this point to note that in literature one can find (\ref{classicprice}) with requirement $a>c\geqslant 0$ and a statement that the Cournot duopoly has the unique Nash equilibrium. In fact, $c>0$ is crucial to the uniqueness of the equilibrium. If $c=0$ and one of the players, say player 1, chooses $q_{1}\geqslant a$ then the player 2's set of best replies is $[0,\infty)$. By completely symmetric arguments, $q_{1} \in [0,\infty)$ is player 1's best reply to $q_{2}\geqslant a$. Thus there would be continuum many equilibria $(q^*_{1}, q^*_{2})$ such that $q^*_{1}, q^*_{2} \geqslant a$ with payoff 0 for both players. 
\section{Remarks on existing quantum duopoly schemes}\label{section3}
In this section we discuss the two main quantum approaches to the problem of duopoly. Both schemes show how to define game with uncountable sets of strategies.
\subsection{The Iqbal-Toor quantum duopoly scheme}\label{sectionMW}
In paper~\cite{iqbalbackwards} the authors adapted the Marinatto-Weber quantum scheme for $2\times 2$ games to the problem of the Stackelberg's duopoly. We restrict ourselves to the Cournot's duopoly, where the players choose their actions simultaneously instead of a sequential order. It does not affect the framework of the quantum scheme but simplifies the analysis. The key idea relies on identifying players' actions $q_{1}, q_{2} \in [0,\infty)$ in the classical duopoly with probabilities that determine the final state in the quantum model of $2\times 2$ games. Then, by appropriately defined measurement operators, the scheme, in particular case, is supposed to output the classical game. Formally, the final state $\rho_{\mathrm{fin}}$  associated with the Iqbal-Toor scheme has the form
\begin{align}\label{finalstate}
\rho_{\mathrm{fin}} &= xy\mathds{1}\otimes \mathds{1} \rho_{\mathrm{in}} \mathds{1}\otimes \mathds{1} + x(1-y)\mathds{1}\otimes \sigma_{x}\rho_{\mathrm{in}}\mathds{1}\otimes \sigma_{x} \nonumber\\ & \quad + (1-x)y\sigma_{x}\otimes \mathds{1}\rho_{\mathrm{in}}\sigma_{x}\otimes \mathds{1} + (1-x)(1-y)\sigma_{x} \otimes \sigma_{x} \rho_{\mathrm{in}}\sigma_{x} \otimes \sigma_{x},
\end{align}
where $x$ and $(1-x)$ $(y ~\mbox{and}~ (1-y))$ are the probabilities of choosing by player 1 (player 2) the identity operator $\mathds{1}$ and the Pauli operator $\sigma_{x}$, respectively. In order to associate player $i$'s actions $q_{i} \in [0,\infty)$ for $i=1,2$ with final state (\ref{finalstate}) the authors defined the following probability relations:
\begin{equation}\label{probabilities}
x = \frac{1}{1+q_{1}},\quad y = \frac{1}{1+q_{2}}.
\end{equation}
As a result, if $\rho_{\mathrm{in}} = |00\rangle \langle 00|$ and the probabilities $x$ and $y$ are given by equation~(\ref{probabilities}), the final state~(\ref{finalstate}) can be written as
\begin{equation}\label{finmw}
\rho_{\mathrm{fin}} = \frac{1}{(1+ q_{1})(1+q_{2})}[|00\rangle \langle 00| + q_{2}|01\rangle \langle 01| + q_{1}|10\rangle \langle 10| + q_{1}q_{2}|11\rangle \langle 11|].
\end{equation}
Given the payoff operator $M_{i}$,
\begin{equation}\label{operatorm}
M_{i} = (1+q_{1})(1+q_{2})q_{i}[(a-c)|00\rangle \langle 00|-|01\rangle \langle 01| - |10\rangle \langle 10|],
\end{equation}
player $i$'s payoff is of the form
\begin{equation}\label{payoff}
u_{i}(q_{1}, q_{2}) = \mathrm{tr}(\rho_{\mathrm{fin}}M_{i}) = q_{i}(a-c - q_{1} - q_{2})~~\mbox{for}~~i=1,2.
\end{equation} 

Let us consider scheme~(\ref{finalstate})-(\ref{payoff}) in view of the Cournot's duopoly. The first problem we can see is that model~(\ref{finalstate})-(\ref{payoff}) in fact does not reproduce the game defined by (\ref{classicpayoff})-(\ref{classicprice}). The payoff function~(\ref{payoff}) coincides with formula (\ref{classicpayoff}) for $q_{1} + q_{2} \leqslant a$ but it does not take into account the market price $P(q_{1},q_{2})$ equal to zero in the case $q_{1} + q_{2} > a$. As a result, if, for example, $q_{2} > a$, player 1's payoff function in the classical case comes down to $-cq_{1}$ and it is different in general from $q_{1}(-a-c - q_{1})$ given by (\ref{payoff}). 

The scheme (\ref{finalstate})-(\ref{payoff}) will generalize the classical Cournot's duopoly if we modify operator~(\ref{operatorm}) to include the case $q_{1} + q_{2} > a$. Let us define
\begin{equation}\label{poprawionym}
M'_{i} = \begin{cases} (1+ q_{1})(1+q_{2})q_{i}[(a-c)|00\rangle \langle 00| - |01\rangle \langle 01| - |10\rangle \langle 10|] &if~~q_{1} + q_{2} \leqslant a \\ (1+ q_{1})(1+q_{2})q_{i}(-c|00\rangle \langle 00|) &if~~q_{1} + q_{2} > a. \end{cases}
\end{equation} 
Then $\mathrm{tr}(\rho_{\mathrm{fin}}M'_{i})$, where $\rho_{\mathrm{fin}}$ is given by~(\ref{finmw}), determines the complete payoff function~(\ref{classicpayoff}). 

It is worth noting that there are many ways to define the measurement operator $M_{i}$ that determines the same outcome for $\rho_{\mathrm{in}} = |00\rangle \langle 00|$. Let operators~(\ref{operatorm}) for $i=1,2$ be replaced by
\begin{equation}\label{m2prim}
\begin{array}{l}
M''_{1} = (1+q_{1})(1+ q_{2})[(a-c)q_{1}|00\rangle \langle 00| - q_{1}|10\rangle \langle 10| - |11\rangle \langle 11|] \\ M''_{2} = (1+q_{1})(1 + q_{2})[(a-c)q_{2}|00\rangle \langle 00| - q_{2}|01\rangle \langle 01| - |11\rangle \langle 11|].
\end{array}
\end{equation}
Then $\mathrm{tr}(\rho_{\mathrm{fin}}M''_{i}) = \mathrm{tr}(\rho_{\mathrm{fin}}M_{i})$ for $\rho_{\mathrm{in}} = |00\rangle \langle 00|$. Simultaneously, the payoff function defined by~(\ref{m2prim}) is different from one determined by~(\ref{operatorm}) for other initial states. For example, in the case of the initial state $\rho_{\mathrm{in}} = |11\rangle \langle 11|$, the final state $\rho_{\mathrm{fin}}$ has the form
\begin{equation}\label{fin11}
\rho_{\mathrm{fin}} = \frac{1}{(1+q_{1})(1+q_{2})}\left[|11\rangle \langle 11| + q_{2}|10\rangle \langle 10| + q_{1}|01\rangle \langle 01|+ q_{1}q_{2}|00\rangle \langle 00|\right].
\end{equation}
Then, the payoff functions corresponding to measurements $M_{1}$ and $M''_{1}$ are given by formulae
\begin{equation}\label{roznefunkcje}
\mathrm{tr}(\rho_{\mathrm{fin}}M_{1}) = q_{1}\left[(a-c)q_{1}q_{2} - q_{1} - q_{2}\right], \quad \mathrm{tr}(\rho_{\mathrm{fin}}M''_{1}) = q_{1}\left[(a-c)q_{1}q_{2} - q_{2}\right] - 1.
\end{equation}

Equations~(\ref{roznefunkcje}) suggest that we can modify model (\ref{finalstate})-(\ref{payoff}) in various ways, each time obtain new results, for example, with respect to Nash equilibria. However, the scheme (\ref{finalstate})-(\ref{payoff}) has a feature that may question the correctness of the model. Namely, it outputs non-classical results already for separable initial states. In order to see that, let us use results~(\ref{fin11}) and (\ref{roznefunkcje}) obtained by letting $\rho_{\mathrm{in}} = |11\rangle \langle 11|$. Then for arbitrary $a-c > 0$ and suitably large $q_{2}>0$
expression $\mathrm{tr}(\rho_{\mathrm{fin}}M_{1})$ as a function of variable $q_{1}$ increases monotonically and $\lim_{q_{1} \to \infty}\mathrm{tr}(\rho_{\mathrm{fin}}M_{1}) = \infty$. In the same manner we can see that if $q_{1}$ is suitably large, the higher quantity $q_{2}$ is, the more player 2 gains. Therefore, scheme (\ref{finalstate})-(\ref{payoff}) with the initial state $\rho_{\mathrm{in}} = |11\rangle \langle 11|$ allows the players to obtain arbitrary large payoffs. By contrast, each player's set of available payoffs in the classical Cournot duopoly is bounded from above. The upper bound is $(a-c)^2/4$. It corresponds to the monopoly quantity $(a-c)/2$ given by solving $\argmax_{q_{1} \geqslant 0}u_{1}(q_{1},0)$.

An easy computation shows that there is no another form of $M_{i}$ to obtain the scheme (\ref{finalstate})-(\ref{payoff}) that outputs the classical game for $\rho_{\mathrm{in}} = |j_{1},j_{2}\rangle \langle j_{1},j_{2}|$, $j_{1},j_{2} = 0,1$ and is a nontrivial generalization of (\ref{classicpayoff}). Indeed, let us consider the general form of a payoff operator for $i=1,2$,
\begin{equation}\label{ogolnym}
M_{i} = (1+q_{1})(1+ q_{2})\left(x^{i}_{1}|00\rangle \langle 00| + x^{i}_{2}|01\rangle \langle 01| + x^{i}_{3}|10\rangle \langle 10| + x^{i}_{4}|11\rangle \langle 11|\right),
\end{equation}
where $x^{i}_{1}, x^{i}_{2}, x^{i}_{3}, x^{i}_{4} \in \mathbb{R}$. Suppose operator $M_{i}$, $i=1,2$ satisfies the equation
\begin{equation}\label{condition}
\mathrm{tr}\left(\rho_{\mathrm{fin}}M_{i}\right) = q_{i}(a-c -q_{1} - q_{2})~~ \mbox{for each}~~j_{1}, j_{2} =0,1.
\end{equation}
If $\rho_{\mathrm{in}} = |j_{1},j_{2}\rangle \langle j_{1},j_{2}|$ for $j_{1}, j_{2} = 0,1$, the final state $\rho_{\mathrm{fin}}$ described by (\ref{finalstate}) assumes the form
\begin{align}
\rho_{\mathrm{fin}} &= \frac{1}{(1+q_{1})(1+q_{2})}(|j_{1},j_{2}\rangle \langle j_{1},j_{2}| + q_{2}|j_{1}, j_{2}\oplus_{2}1\rangle \langle j_{1}, j_{2}\oplus_{2}1| \nonumber\\  &\quad+ q_{1}|j_{1}\oplus_{2} 1, j_{2}\rangle \langle j_{1}\oplus_{2} 1, j_{2}| + q_{1}q_{2}|j_{1}\oplus_{2}1, j_{2}\oplus_{2}1\rangle \langle j_{1}\oplus_{2}1, j_{2}\oplus_{2}1|),
\end{align}
where $\oplus_{2}$ means addition modulo 2. Then condition~(\ref{condition}) determines a system of four equations
\begin{equation}
\begin{cases}x_{1} + q_{2}x_{2} + q_{1}x_{3} + q_{1}q_{2}x_{4} - q_{i}(a-c-q_{1}-q_{2}) = 0 \\ q_{1}q_{2}x_{1} + q_{1}x_{2} + q_{2}x_{3} + x_{4} -q_{i}(a-c-q_{1}-q_{2}) = 0 \\ q_{2}x_{1} + x_{2} + q_{1}q_{2}x_{3} + q_{1}x_{4} - q_{i}(a-c-q_{1}-q_{2}) = 0 \\ q_{1}x_{1} + q_{1}q_{2}x_{2} + x_{3} + q_{2}x_{4} - q_{i}(a-c-q_{1}-q_{2}) = 0 \end{cases}
\end{equation}
that has the unique solution
\begin{equation}
x_{1}=x_{2} = x_{3} = x_{4} = \frac{q_{i}(a-c-q_{1}-q_{2})}{(1+q_{1})(1+q_{2})}.
\end{equation}
As a result, operator (\ref{ogolnym}) comes down to the trivial payoff operator
\begin{equation}
M_{i} = q_{i}(a-c - q_{1}-q_{2})\mathds{1}\otimes \mathds{1}
\end{equation}
that implies the outcome $\mathrm{tr}(\rho_{\mathrm{fin}}M_{i}) = q_{i}(a-c-q_{1} -q_{2})$ for each density operator $\rho_{\mathrm{in}}$ defined on $\mathbb{C}^{2}\otimes \mathbb{C}^{2}$.

Clearly, in the game determined by (\ref{finalstate})-(\ref{payoff}) and $\rho_{\mathrm{in}} = |11\rangle \langle 11|$ there is no profitable Nash equilibrium. For the fixed profile $(q_{1}, q_{2})$, player $i$ would benefit by choosing $q'_{i} > q_{i}$. Thus, one may question, if the players are able to obtain superior payoffs compared to the classical case. However, as it has been mentioned, that model does not take into account the proper price function~(\ref{classicprice}). It cannot then be considered in terms of a generalization of the Cournot model. Let us now replace~(\ref{operatorm}) with~(\ref{poprawionym}). This gives
\begin{equation}\label{function11}
u^{Q}_{i}(q_{1},q_{2}) = \mathrm{tr}(\rho_{\mathrm{fin}}M'_{i}) = \begin{cases}q_{i}\left[(a-c)q_{1}q_{2} - q_{1} - q_{2}\right] &\mbox{if}~q_{1} + q_{2} \leqslant a, \\ -cq_{1}q_{2} &\mbox{if}~q_{1} + q_{2} > a. \end{cases}
\end{equation}
Let us assume now that $a \geqslant (c + \sqrt{c^2 + 16})/2$ and consider a strategy profile $(q^*_{1}, q^*_{2}) = (a/2, a/2)$. If $q_{1} \leqslant a/2$ then
\begin{align}
u^{Q}_{1}\left(\frac{a}{2}, \frac{a}{2}\right) - u^{Q}_{1}\left(q_{1}, \frac{a}{2}\right) &= \left(\frac{a}{2} - q_{1}\right)\left[\left(\frac{a}{2}(a-c)-1\right)\left(\frac{a}{2}+q_{1}\right) - \frac{a}{2}\right] \nonumber\\ &\geqslant \left(\frac{a}{2} - q_{1}\right)\left(\frac{a}{2} + q_{1} - \frac{a}{2}\right) =\left(\frac{a}{2} - q_{1}\right)q_{1} \geqslant 0,
\end{align}
where the second to last inequality follows from the fact that the assumption $a \geqslant (c + \sqrt{c^2 + 16})/2$ implies $(a/2)(a-c)-1 \geqslant 1.$ In the case $q_{1} > a/2$,
\begin{align}
u^{Q}_{1}\left(q_{1},\frac{a}{2}\right) = -cq_{1}\frac{a}{2} &<0= \left(\frac{a}{2}\right)^2 - \left(\frac{a}{2}\right)^2 \nonumber \\ &\leqslant \left(\frac{a}{2}\right)^2\left(\frac{a}{2}(a-c)-1\right) - \left(\frac{a}{2}\right)^2 = u^{Q}_{1}\left(\frac{a}{2}, \frac{a}{2}\right).
\end{align}
We have thus proved that $q^*_{1} = a/2$ is a player 1's best response to $q^*_{2} = a/2$. In an exactly similar way we can show that $q^*_{2} = a/2$ is a best response to $q^*_{1} = a/2$. This means that $(q^*_{1}, q^*_{2})$ is a Nash equilibrium. We now conclude from comparing the equilibrium payoffs in the classical Cournot duopoly and $u^{Q}_{i}(a/2,a/2)$ defined by function~(\ref{function11}) that the latter payoff can be much greater. 
\begin{figure}[t]
\centering
\includegraphics[scale=1]{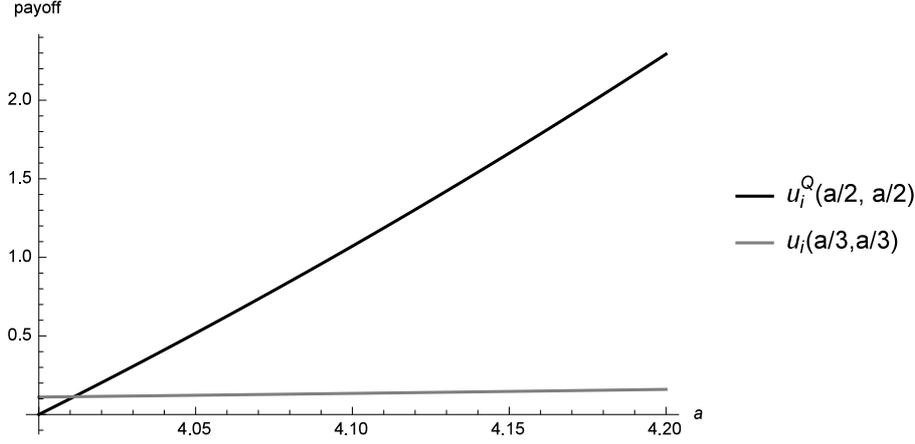}
\caption{The graphs of $u_{i}(a/3,a/3)$ and $u^{Q}_{i}(a/2,a/2)$ for initial numbers $a \geqslant (c + \sqrt{c^2 + 16})/2$, where $c = 3$. \label{figure0}}
\end{figure}
For example (see also the graphs in Fig~\ref{figure0}), if $a=30$ and $c=3$, the classical equilibrium outcome is 81 whereas $u^{Q}_{i}(a/2,a/2) = 90675$.

It is worth noting that the problem of nonclassical results determined by the basis states $\rho_{\mathrm{in}}$ spreads to other duopoly examples based on scheme (\ref{finalstate})-(\ref{payoff}). Let us recall the problem of Bertrand duopoly studied in~\cite{bertrand}. In this case two firms (players) compete in price. Let $p_{1}$ and $p_{2}$ be the prices chosen by player 1 and 2, respectively. Then the product quantities are given by
\begin{equation}\label{b1}
q_{1} = a - p_{1} + bp_{2},~~ q_{2} = a - p_{2} + bp_{1},~~\mbox{where}~~0<b<1,
\end{equation}
and the payoff functions are
\begin{equation}\label{b2}
u_{1}(p_{1}, p_{2}) = (a-p_{1} + bp_{2})(p_{1}-c),~~ u_{2}(p_{1}, p_{2}) = (a-p_{2} + bp_{1})(p_{2}-c).
\end{equation}
The quantum scheme for the problem~(\ref{b1})-(\ref{b2}) proceeds similarly to (\ref{finalstate})-(\ref{payoff}) where each $q_{i}$ in (\ref{probabilities}) is replaced with $p_{i}$ (for the detailed description we refer the reader to \cite{bertrand}). It was designed to consider the initial state of the form
\begin{eqnarray}
|\Psi_{\mathrm{in}}\rangle = \cos{\gamma}|00\rangle + \sin{\gamma}|11\rangle,~~\gamma \in [0,2\pi). \nonumber
\end{eqnarray}
Then the resulting payoff functions $u^{Q}_{1}$ and $u^{Q}_{2}$ depend on a triple $(p_{1}, p_{2}, \gamma)$ and are equal to
\begin{eqnarray}\label{bertrandpayoff}
&&u^{Q}_{1}(p_{1}, p_{2}) = (a-p_{1} + bp_{2})\left[(p_{1} - c)\cos^2{\gamma} + \left(p_{2} + p_{1}\left(-1 -cp_{2} + p^2_{2}\right)\right)\sin^{2}{\gamma}\right], \nonumber \\ &&u^{Q}_{2}(p_{1}, p_{2}) = (a-p_{2} + bp_{1})\left[(p_{2} - c)\cos^2{\gamma} + \left(p_{1} + p_{2}\left(-1 -cp_{1} + p^2_{1}\right)\right)\sin^{2}{\gamma}\right]. \nonumber
\end{eqnarray}
Thus, in particular, if $\rho_{\mathrm{in}} = |11\rangle \langle 11|$ and we consider strategy profile in the form of $(0,p_{2})$, $p_{2}>0$ the formulae $u^{Q}_{1}(p_{1}, p_{2})$ comes down to 
\begin{equation}
u^{Q}_{1}(p_{1}, p_{2}) = (a + bp_{2})p_{2},~~u^{Q}_{2}(p_{1}, p_{2}) = -(a-p_{2})p_{2}.
\end{equation}
Now, it is easily seen that the players' payoffs approach infinity as $p_{2}$ approaches infinity. This is clearly a nonclassical result as in the classical case~(\ref{b2})
\begin{equation}
\sup_{p_{1}, p_{2} \geqslant 0}\left(u_{1}(p_{1},p_{2}) + u_{2}(p_{1},p_{2})\right) = \frac{[a-c(1-b)]^2}{2(1-b)} < \infty.
\end{equation}

The question now arises: given the results above, can the quantum scheme (\ref{finalstate})-(\ref{poprawionym}) be considered reasonable? The requirement taken from quantum schemes for finite strategic games is satisfied: the duopoly example defined by~(\ref{classicpayoff})-(\ref{classicprice}) can be obtained from (\ref{finalstate})-(\ref{poprawionym}). Thus, in this sense, the quantum duopoly based on the Marinatto-Weber scheme appears to be well-defined. However, there is also some implicit condition embeded in the framework for quantum playing finite strategic games. Namely, each payoff profile generated by the quantum scheme lies in the convex hull of set of payoff profiles
determined by the classical game. For example, in the Marinatto-Weber and the Eisert-Wilkens-Lewenstein schemes for $2\times 2$ games the payoff operator has the form $M = \sum_{j_{1},j_{2}=0,1} x_{j_{1},j_{2}}|j_{1},j_{2}\rangle \langle j_{1},j_{2}|$, where $x_{j_{1},j_{2}} \in \mathbb{R}^2$ are the four payoff pairs that define the $2\times 2$ game. Then for any density operator $\rho_{\mathrm{fin}}$ on $\mathbb{C}^2\otimes \mathbb{C}^2$ determined by the players, the resulting payoff pair $\mathrm{tr}(\rho_{\mathrm{fin}}M)$ is a convex combination of points $x_{j_{i},j_{2}}$. If we assumed that equal convex hulls of payoff profiles generated by both the classical and quantum game were a necessary condition for the quantum scheme to be a correct one, then the protocol (\ref{finalstate})-(\ref{poprawionym}) would not be valid.
\subsection{The Li-Du-Massar quantum duopoly scheme}
The quantum protocol introduced in \cite{du} is another way to define the problem of duopoly in the quantum domain. Let us recall the formal description of this scheme \cite{du,hotel}. 

Let $|\Psi_{\mathrm{in}}\rangle = |00\rangle$ be the initial state and $J(\gamma) = \exp\left\{-\gamma\left(a^{\dag}_{1}a^{\dag}_{2}-a_{1}a_{2}\right)\right\}$, $\gamma \geqslant 0$, where $a^{\dag}_{i}$ $(a_{i})$ is the creation (annihilation) operator of player i's electromagnetic field. The player $i$'s strategies depend on $x_{i}\in [0,\infty)$ and they are given by formula
\begin{equation}\label{operatorydu}
D_{i}(x_{i}) = \exp\left\{\frac{x_{i}(a^{\dag}_{i}-a_{i})}{\sqrt{2}}, x_{i}\in [0,\infty)\right\},~i=1,2.
\end{equation} 
A quantum measurement on state $|\Psi_{\mathrm{fin}}\rangle$, 
\begin{equation}
|\Psi_{\mathrm{fin}}\rangle = J(\gamma)^{\dag}(D_{1}(x_{1})\otimes D_{2}(x_{2}))J(\gamma)|00\rangle,
\end{equation}
described by the observables $X_{i}=\left(a^{\dag}_{i} + a_{i}\right)/\sqrt{2}$, $i=1,2$ gives the quantities
\begin{equation}\label{quantumquantities}
\begin{array}{l}
q_{1} = \langle \Psi_{\mathrm{fin}}|X_{1}|\Psi_{\mathrm{fin}}\rangle = x_{1}\cosh{\gamma} + x_{2}\sinh{\gamma},\cr q_{2} = \langle \Psi_{\mathrm{fin}}|X_{2}|\Psi_{\mathrm{fin}}\rangle = x_{2}\cosh{\gamma} + x_{1}\sinh{\gamma}.
\end{array}
\end{equation}
Equations~(\ref{quantumquantities}) are obtained by using the formula (see also \cite{quantumbook})
\begin{eqnarray}
\mathrm{e}^{\lambda A}B\mathrm{e}^{-\lambda A} = B\cosh{\lambda\sqrt{\beta}} + \frac{[A,B]}{\sqrt{B}}\sinh{\lambda\sqrt{\beta}}, \nonumber
\end{eqnarray}
where the operators $A$ and $B$ satisfy the relation $[A, [A,B]] = \beta B$,~ $\beta$-constant. Given~(\ref{quantumquantities}) the payoff functions are
\begin{equation}\label{payoffli}
\begin{array}{l}
u_{1}^{Q}(x_{1},x_{2}) = (x_{1}\cosh{\gamma} + x_{2}\sinh{\gamma})[a-c-\mathrm{e}^{\gamma}(x_{1} + x_{2})] \cr u_{2}^{Q}(x_{1},x_{2}) = (x_{2}\cosh{\gamma} + x_{1}\sinh{\gamma})[a-c-\mathrm{e}^{\gamma}(x_{1} + x_{2})]
\end{array}
\end{equation}
that simply follow from the payoff function (\ref{classicpayoff}) for $q_{1} + q_{2} \leqslant a$. 

Note that, in contrast to the previous scheme (\ref{finalstate})-(\ref{payoff}), the Li-Du-Massar scheme does not act on the payoff function $u_{i}$ but merely defines a new relation between the players' choices $x_{1}, x_{2}$ and the resulting quantities $q_{1}, q_{2}$. The values $q_{1}, q_{2}$ given by (\ref{quantumquantities}) are still the real numbers from set $[0,\infty)$. It implies that the set of available payoffs determined by the Li-Du-Massar scheme coincides with the one of the classical problem. However, similarly to (\ref{finalstate})-(\ref{payoff}), the Li-Du-Massar scheme does not take into account the complete market price function (\ref{classicpayoff}), i.e., the case $P(q_{1}, q_{2}) = 0$ if $q_{1} + q_{2}>a$. As a consequence, one cannot say that model defined by (\ref{operatorydu})-(\ref{payoffli}) generalizes the Cournot duopoly problem. Reffering to (\ref{classicpayoff}), the payoff function $u_{1(2)}^{Q}(x_{1},x_{2})$ should be extended to take the form 
\begin{equation}\label{poprawionypayoff}
u_{1(2)}^{Q}(x_{1},x_{2}) = \begin{cases}\left(x_{1(2)}\cosh{\gamma} + x_{2(1)}\sinh{\gamma}\right)\left[a-c - \mathrm{e}^{\gamma}(x_{1}+x_{2})\right] &\mbox{if}~ \mathrm{e}^{\gamma}(x_{1}+x_{2}) \leqslant a, \cr -c\left(x_{1(2)}\cosh{\gamma} + x_{2(1)}\sinh{\gamma}\right) & \mbox{if}~ \mathrm{e}^{\gamma}(x_{1}+x_{2}) > a.\end{cases}
\end{equation}
Then if $\gamma = 0$, formula (\ref{poprawionypayoff}) comes down to (\ref{classicpayoff}). 

The question is now: whether the quantum scheme with payoff function (\ref{payoffli}) and (\ref{poprawionypayoff}) have different sets of Nash equilibria. Similarly to the classical case, if $c=0$, each profile $(x^*_{1},x^*_{2})$ such that $x_{1}, x_{2} \geqslant \mathrm{e}^{-\gamma}a$ is a Nash equilibrium in the case (\ref{poprawionypayoff}). The players obtain the payoff 0 and any unilateral deviation from the equilibrium strategy does not change the player's payoff.  If $c>0$ then there is a unique Nash equilibrium $(x^*_{1}, x^*_{2})$ that coincides with the one determined in \cite{du}.
However, the proof of existence and uniqueness of the equilibrium needs a more sophisticated reasoning compared with \cite{du}. In what follows, we give a rigorous proof of the following fact:
\begin{proposition}\label{prop1}
If the marginal cost $c$ is positive, the quantum Cournot duopoly defined by the Li-Du-Massar scheme with the payoff function (\ref{poprawionypayoff}) has the unique Nash equilibrium $(x^*_{1}, x^*_{2})$ such that
\begin{equation}\label{uniqueequilibrium}
x^*_{1} = x^*_{2} = \frac{(a-c)\cosh{\gamma}}{1+ 2\mathrm{e}^{2\gamma}}. 
\end{equation}
\end{proposition}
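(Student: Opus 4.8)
The plan is to reduce the two-player problem to a one-variable best-response analysis and then to solve the resulting fixed-point system. Fix player 2's choice $x_{2}\geqslant 0$ and regard $u^{Q}_{1}(\cdot,x_{2})$ from (\ref{poprawionypayoff}) as a function of $x_{1}\geqslant 0$. Two structural facts drive the argument. First, the two branches of (\ref{poprawionypayoff}) agree on the line $\mathrm{e}^{\gamma}(x_{1}+x_{2})=a$, where both equal $-c(x_{1}\cosh\gamma+x_{2}\sinh\gamma)$, so $u^{Q}_{1}$ is continuous. Second, on the upper branch $u^{Q}_{1}=-c(x_{1}\cosh\gamma+x_{2}\sinh\gamma)$, and since $c>0$ while $x_{1}\cosh\gamma+x_{2}\sinh\gamma$ is strictly increasing in $x_{1}$, this branch is strictly decreasing. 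Hence player 1 never maximizes strictly inside the region $\mathrm{e}^{\gamma}(x_{1}+x_{2})>a$; the maximizer is attained in the closed low-quantity set $x_{1}+x_{2}\leqslant\mathrm{e}^{-\gamma}a$ (or at $x_{1}=0$ if that set is empty). It is precisely here that $c>0$ is indispensable: for $c=0$ the upper branch is constant, the best response becomes set-valued, and the continuum of equilibria noted before the Proposition arises.

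On the lower branch $u^{Q}_{1}(\cdot,x_{2})$ is a downward parabola in $x_{1}$ with leading coefficient $-\mathrm{e}^{\gamma}\cosh\gamma<0$, whose vertex is
\[
\hat{x}_{1}(x_{2})=\frac{(a-c)\cosh\gamma-\mathrm{e}^{2\gamma}x_{2}}{2\mathrm{e}^{\gamma}\cosh\gamma}.
\]
Using $\cosh\gamma+\sinh\gamma=\mathrm{e}^{\gamma}$ and the identity $2\mathrm{e}^{\gamma}\cosh\gamma-\mathrm{e}^{2\gamma}=1$, a short computation shows that whenever $\hat{x}_{1}(x_{2})\geqslant 0$ the vertex also obeys the feasibility bound $\hat{x}_{1}(x_{2})\leqslant\mathrm{e}^{-\gamma}a-x_{2}$, so it is the constrained maximizer; when $\hat{x}_{1}(x_{2})<0$ the parabola decreases on the feasible interval and the maximizer is $x_{1}=0$. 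Together with continuity and the monotonicity of the upper branch, this promotes the local maximizer to the global one and gives a single-valued best response
\[
x^{BR}_{1}(x_{2})=\max\left\{0,\;\frac{(a-c)\cosh\gamma-\mathrm{e}^{2\gamma}x_{2}}{2\mathrm{e}^{\gamma}\cosh\gamma}\right\},
\]
with the symmetric formula for $x^{BR}_{2}(x_{1})$.

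It remains to solve $x_{1}=x^{BR}_{1}(x_{2})$, $x_{2}=x^{BR}_{2}(x_{1})$. For an interior fixed point (both coordinates positive) the relations read $x_{1}=\alpha-\beta x_{2}$ and $x_{2}=\alpha-\beta x_{1}$ with $\alpha=(a-c)/(2\mathrm{e}^{\gamma})$ and $\beta=\mathrm{e}^{2\gamma}/(2\mathrm{e}^{\gamma}\cosh\gamma)=\mathrm{e}^{2\gamma}/(1+\mathrm{e}^{2\gamma})\in(0,1)$; subtracting them yields $(1-\beta)(x_{1}-x_{2})=0$, hence $x_{1}=x_{2}$, and then $x^{*}(2\mathrm{e}^{\gamma}\cosh\gamma+\mathrm{e}^{2\gamma})=(a-c)\cosh\gamma$, which by $2\mathrm{e}^{\gamma}\cosh\gamma+\mathrm{e}^{2\gamma}=1+2\mathrm{e}^{2\gamma}$ is exactly (\ref{uniqueequilibrium}); one checks $x^{*}\leqslant(a-c)\cosh\gamma/\mathrm{e}^{2\gamma}$, so this point is genuinely interior. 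Finally I would rule out the boundary fixed points: the corner $(0,0)$ fails since $x^{BR}_{i}(0)=(a-c)/(2\mathrm{e}^{\gamma})>0$, and a solution with, say, $x_{1}=0$, $x_{2}>0$ forces $x_{2}=x^{BR}_{2}(0)=(a-c)/(2\mathrm{e}^{\gamma})$, which is too small to justify $x_{1}=0$ because $\mathrm{e}^{\gamma}/2<\cosh\gamma$ gives $(a-c)/(2\mathrm{e}^{\gamma})<(a-c)\cosh\gamma/\mathrm{e}^{2\gamma}$ and hence $x^{BR}_{1}(x_{2})>0$. This leaves (\ref{uniqueequilibrium}) as the unique Nash equilibrium.

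The technical heart is the second step: the payoff is neither smooth nor globally concave, so one cannot simply differentiate and set to zero. The delicate points are verifying that the unconstrained vertex of the lower-branch parabola actually lands in the feasible interval (the identity $2\mathrm{e}^{\gamma}\cosh\gamma-\mathrm{e}^{2\gamma}=1$ does the work) and using continuity together with the strict decrease of the upper branch to conclude that this vertex is the global maximizer over all of $[0,\infty)$.
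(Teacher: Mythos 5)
Your proof is correct and follows essentially the same route as the paper: it derives the single-valued, piecewise-linear best-reply functions by splitting (\ref{poprawionypayoff}) into its two branches (using $c>0$ to make the upper branch strictly decreasing and the identity $2\mathrm{e}^{\gamma}\cosh{\gamma}=\mathrm{e}^{2\gamma}+1$ to place the vertex of the lower-branch parabola inside the feasible region, which is the same computation the paper performs when it checks $\mathrm{e}^{\gamma}(x_{1}+x_{2})\leqslant a-c$), and then intersects the two reaction curves. If anything, your explicit elimination of the corner and boundary fixed points is slightly more rigorous than the paper's appeal to the picture of the intersecting best-reply graphs.
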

\begin{proof}
The proof proceeds along the same lines as the proof in the classical Cournot duopoly \cite{peters}. We use the fact that a Nash equilibrium is a strategy profile $(x^*_{1}, x^*_{2})$ where $x^*_{1}$ and $x^*_{2}$ are mutually best replies. First, we determine the best reply function $\beta_{1}(x_{2})$ of player 1. It can be obtained by solving the maximization problem
\begin{equation}\label{argmax}
\argmax_{x_{1} \in [0,\infty)}u^{Q}_{1}(x_{1}, x_{2})~~\mbox{for a given value}~~x_{2} \geqslant 0.
\end{equation}
Let us first consider the case when $x_{2} \leqslant a\mathrm{e}^{-\gamma}$. Then the maximization problem~(\ref{argmax}) comes down to maximizing the function 
\begin{eqnarray}\label{czescfunkcji}
(x_{1}\cosh{\gamma} + x_{2}\sinh{\gamma})(a-c - \mathrm{e}^{\gamma}(x_{1} + x_{2})). 
\end{eqnarray}
For case $0 \leqslant x_{2} \leqslant (a-c)\mathrm{e}^{-2\gamma}\cosh{\gamma}$ the maximum point is $x_{1} = [(a-c)\cosh{\gamma} - \mathrm{e}^{2\gamma}x_{2}]/(\mathrm{e}^{2\gamma}+1)$. To show that $x_{1}$ is the unique best reply in this case let us note that $x_{2} \leqslant (a-c)\mathrm{e}^{-2\gamma}\cosh{\gamma} \leqslant (a-c)\cosh{\gamma}$ for $\gamma \geqslant 0$. Thus we have 
\begin{eqnarray}
\mathrm{e}^{\gamma}(x_{1} + x_{2}) = \frac{(a-c)\cosh{\gamma} + x_{2}}{2\cosh{\gamma}} \leqslant a-c.
\end{eqnarray}
This means that the player 1's payoff given by~(\ref{czescfunkcji}) is nonnegative. As a result, player~1 would make a loss by choosing $x_{1}$ such that $x_{1}+x_{2} > a\mathrm{e}^{-\gamma}$. For $(a-c)\mathrm{e}^{-2\gamma}\cosh{\gamma} < x_{2} \leqslant a\mathrm{e}^{-\gamma}$ function~(\ref{czescfunkcji}) of variable $x_{1}$ is strictly decreasing on $[0,\infty)$. Hence, it is optimal for player 1 to take $x_{1}=0$ and obtain $(a-c - \mathrm{e}^{\gamma}x_{2})x_{2}\sinh{\gamma}$. Note also that player 1 would not be willing to take $x_{1}$ such that $x_{1} + x_{2} > a\mathrm{e}^{-\gamma}$. Indeed, since $x_{2} \leqslant a\mathrm{e}^{-\gamma}$ and $\gamma \geqslant 0$, it is true that
\begin{align}
(a-c - \mathrm{e}^{\gamma}x_{2})x_{2}\sinh{\gamma} &= (a-\mathrm{e}^{\gamma}x_{2})x_{2}\sinh{\gamma} - cx_{2}\sinh{\gamma}\nonumber\\ &\geqslant -cx_{2}\sinh{\gamma} > -c(x_{1}\cosh{\gamma} + x_{2}\sinh{\gamma})
\end{align}
for each $x_{1} > 0.$

If $x_{2} > a\mathrm{e}^{-\gamma}$, problem~(\ref{argmax}) is equivalent to $\argmax_{x_{1} \in [0,\infty)}\{-c(x_{1}\cosh{\gamma} + x_{2}\sinh{\gamma})\}$. In this case it is optimal player 1 to choose $x_{1}=0$. Summarizing, player 1's best reply function is as follows
\begin{eqnarray}\label{bestreply1}
\beta_{1}(x_{2}) = \begin{cases}\frac{(a-c)\cosh{\gamma} - \mathrm{e}^{2\gamma}x_{2}}{\mathrm{e}^{2\gamma}+1} &\mbox{if}~ x_{2} \leqslant (a-c)\mathrm{e}^{-2\gamma}\cosh{\gamma} \\ 0 &\mbox{if}~ x_{2} > (a-c)\mathrm{e}^{-2\gamma}\cosh{\gamma}. \end{cases}
\end{eqnarray}
Similar arguments to those above show that the player 2's best reply function $\beta_{2}(x_{1})$ is given by formula
\begin{eqnarray}\label{bestreply2}
\beta_{2}(x_{1}) = \begin{cases}\frac{(a-c)\cosh{\gamma} - \mathrm{e}^{2\gamma}x_{1}}{\mathrm{e}^{2\gamma}+1} &\mbox{if}~ x_{1} \leqslant (a-c)\mathrm{e}^{-2\gamma}\cosh{\gamma} \\ 0 &\mbox{if}~ x_{1} > (a-c)\mathrm{e}^{-2\gamma}\cosh{\gamma}. \end{cases}
\end{eqnarray} 
We recall that a Nash equilibrium is a strategy profile at which each player chooses a best response to the other players' strategies. As a result, the Nash equilibria in the Li-Du-Massar scheme with (\ref{poprawionypayoff}) can be identified with the points of intersection of graphs determined by the best reply functions $\beta_{1}(x_{2})$ and $\beta_{2}(x_{1})$. The graphs are drawn in Fig~\ref{figure1}.
\begin{figure}[t]
\centering
\includegraphics[scale=1]{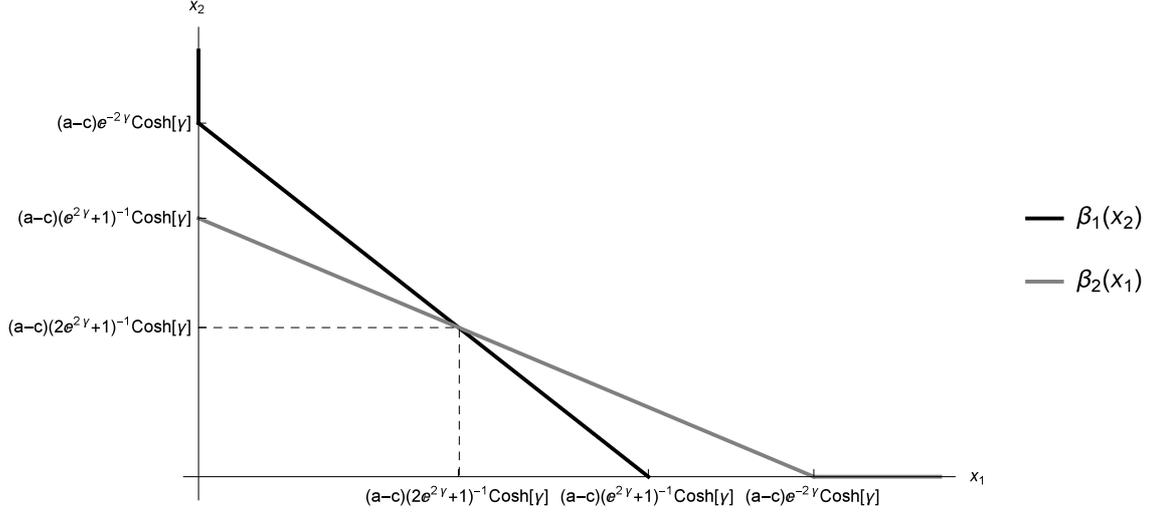}
\caption{The graphs of the best reply functions $\beta_{1}(x_{2})$ and $\beta_{2}(x_{1})$ given by~(\ref{bestreply1}) and~(\ref{bestreply2}). The intersection represents the unique Nash equilibrium in the game. \label{figure1}}
\end{figure}
In this way, there is the unique Nash equilibrium. It is obtained by solving the system of linear equations
\begin{eqnarray}
\begin{cases}x_{1} = \frac{(a-c)\cosh{\gamma} - \mathrm{e}^{2\gamma}x_{2}}{\mathrm{e}^{2\gamma}+1} \\ x_{2} = \frac{(a-c)\cosh{\gamma} - \mathrm{e}^{2\gamma}x_{1}}{\mathrm{e}^{2\gamma}+1}.\end{cases}
\end{eqnarray}
Thus, the solution coincides with the result provided in \cite{du}. Namely, $x^*_{1} = x^*_{2} = (a-c)(2\mathrm{e}^{2\gamma}+1)^{-1}\cosh{\gamma}.$ 
\end{proof}
It is worth noting that the quantum extension of the Cournot duopoly introduced in~\cite{du} affects the quantities $q_{1}$ and $q_{2}$ and makes use of the payoff function of the classically played duopoly. This implies that the set of payoff profiles generated by the Li-Du-Massar scheme with (\ref{poprawionypayoff}) is equal to one determined by function~(\ref{classicpayoff}). In particular, by solving the maximization problem
\begin{equation}
\argmax_{q_{1},q_{2} \in [0,\infty)}\left(u^{Q}_{1}(x_{1}, x_{2}) + u^{Q}_{2}(x_{1},x_{2})\right) = \argmax_{q_{1},q_{2} \in [0,\infty)}\mathrm{e}^{\gamma}(x_{1} + x_{2})\left(a-c - \mathrm{e}^{\gamma}(x_{1} + x_{2})\right)
\end{equation}
we obtain the set $\left\{(x_{1},x_{2})\colon x_{1} + x_{2} = (a-c)\mathrm{e}^{-\gamma}/2 \right\}$. Hence, for each $\gamma \in [0,\infty)$, a~symmetric Pareto optimal outcome $u^{Q}_{i}((a-c)\mathrm{e}^{-\gamma}/4, (a-c)\mathrm{e}^{-\gamma}/4)$ is equal to $(a-c)^2/8$ in both the classical and quantum case.
\section{Another example of the quantum Cournot duopoly scheme}
The Li-Du-Massar scheme with the refined payoff function~(\ref{poprawionypayoff}) is defined in accordance with the quantum protocols for finite games. The scheme generalizes the classicaly played Cournot duopoly and it keeps the set of feasible payoff profiles unchanged. We saw in subsection~\ref{sectionMW} that the scheme based on the Marinatto-Weber approach for bimatrix games does not satisfy the latter condition. The question now is whether these two requirements on a quantum scheme imply the unique quantum model for the Cournot duopoly problem. The two well-known and quite different quantum schemes for bimatrix games, introduced in \cite{ewl} and \cite{marinatto} suggest that the answer ought to be negative. In fact, this is the case. We can define another scheme that is consistent with the requirements above. In what follows, we give an example of a scheme that is similar in concept to the Li-Du-Massar scheme. The idea is based on `entangling' the players' quantities $x_{1}$ and $x_{2}$ in order to obtain $q_{1} = ax_{1} + bx_{2}$ and $q_{2} = ax_{2} + bx_{1}$ for some real numbers $a$ and $b$. 

Let $|\Psi_{\mathrm{in}}\rangle = |00\rangle$ be the initial state and $I(\gamma)$ be an entangling operator,
\begin{equation}\label{mojscheme1}
I(\gamma) =  \mathds{1}^{\otimes 2}\cos{\gamma} + \mathrm{i}\sigma^{\otimes 2}_{x}\sin{\gamma},~\gamma \in [0,\pi/4],
\end{equation} 
where $\mathds{1}$ and $\sigma_{x}$ are the identity and Pauli operator $X$, respectively, defined on $\mathbb{C}^2$. The resulting state is then given by
\begin{equation}
|\Psi_{\mathrm{fin}}\rangle = I(\gamma)|00\rangle = \cos{\gamma}|00\rangle + \mathrm{i}\sin{\gamma}|11\rangle. 
\end{equation}
Let us identify the player $i$'s strategies with $x_{i} \in [0,\infty)$ for $i=1,2$. The values $x_{1}$ and $x_{2}$ determine two positive operators $\left\{M_{1}(x_{1},x_{2}),M_{2}(x_{1},x_{2})\right\}$ given by formula
\begin{equation}
M_{i}(x_{1},x_{2}) = \begin{cases} x_{1}|0\rangle \langle 0| + x_{2}|1\rangle \langle 1| &\mbox{if}~i=1 \\ x_{2}|0\rangle \langle 0| + x_{1}|1\rangle \langle 1| &\mbox{if}~i=2.\end{cases}
\end{equation}
The measurement defined by $M_{i}(x_{1},x_{2})$ determines the quantities $q_{1}$ and $q_{2}$ in the following way:
\begin{equation}
q_{1} = \mathrm{tr}(M_{1}\rho_{1}),\quad q_{2} = \mathrm{tr}(M_{2}\rho_{2}),
\end{equation}
where $\rho_{1}$ and $\rho_{2}$ are the reduced density operators $\mathrm{tr}_{2}(|\Psi_{\mathrm{fin}}\rangle \langle \Psi_{\mathrm{fin}}|)$ and $\mathrm{tr}_{1}(|\Psi_{\mathrm{fin}}\rangle \langle \Psi_{\mathrm{fin}}|)$ of the first and second qubit, respectively. As a result
\begin{equation}
q_{1} = x_{1}\cos^2{\gamma} + x_{2}\sin^2{\gamma}, \quad q_{2} = x_{2}\cos^2{\gamma} + x_{1}\sin^2{\gamma}.
\end{equation}
Referring to function~(\ref{classicpayoff}) we obtain the following players' payoff functions:
\begin{equation}\label{mojscheme2}
u^{Q}_{1(2)}(x_{1}, x_{2}) = \begin{cases}\left(x_{1(2)}\cos^2{\gamma} + x_{2(1)}\sin^2{\gamma}\right)(a-c - x_{1} - x_{2}) &\mbox{if}~  x_{1} + x_{2} \leqslant a \\ -c\left(x_{1(2)}\cos^2{\gamma} + x_{2(1)}\sin^2{\gamma}\right) &\mbox{if}~ x_{1} + x_{2} > a.\end{cases}
\end{equation} 
Certainly, scheme~(\ref{mojscheme1})-(\ref{mojscheme2}) is a generalization of the classically played Cournot duopoly. If $\gamma = 0$ then payoff function (\ref{mojscheme2}) boils down to function (\ref{classicpayoff}). Since the scheme uses the classical payoff function, the corresponding set of feasible payoffs remains the same as in the classical case.
\begin{proposition}
If the marginal cost is positive, the set $N$ of Nash equilibria in the game defined by scheme~(\ref{mojscheme1})-(\ref{mojscheme2}) is as follows:
\begin{equation}
N = \begin{cases}\left\{\left(\frac{(a-c)\cos^2{\gamma}}{2\cos^2{\gamma} + 1}, \frac{(a-c)\cos^2{\gamma}}{2\cos^2{\gamma} + 1}\right)\right\} &\mbox{if}~\gamma \ne \pi/4 \\ \left\{\left(x,\frac{a-c}{2} - x\right), x \in \left[0, \frac{a-c}{2}\right]\right\} &\mbox{if}~\gamma = \pi/4. \end{cases}
\end{equation}
\end{proposition}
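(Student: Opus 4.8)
The plan is to mirror the proof of Proposition~\ref{prop1}: determine each player's best reply function by solving the maximization problem, and then identify the Nash equilibria with the points where the two best reply graphs meet. Throughout I would abbreviate $\cos^2\gamma$ and $\sin^2\gamma$, using that $\gamma \in [0,\pi/4]$ forces $\cos^2\gamma \in [1/2,1]$ and $\cos^2\gamma \geqslant \sin^2\gamma$, with equality precisely when $\gamma = \pi/4$. The key structural observation, which I would establish at the end, is that the two best reply lines are transversal exactly when $\gamma \ne \pi/4$ and become identical when $\gamma = \pi/4$.

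First I would compute $\beta_1(x_2) = \argmax_{x_1 \geqslant 0} u^{Q}_1(x_1,x_2)$. For $x_2 \leqslant a$, on the region $x_1 + x_2 \leqslant a$ the payoff is the downward parabola $(x_1\cos^2\gamma + x_2\sin^2\gamma)(a-c-x_1-x_2)$, whose vertex is at $x_1 = [(a-c)\cos^2\gamma - x_2]/(2\cos^2\gamma)$. I would check that when $x_2 \leqslant (a-c)\cos^2\gamma$ this vertex is nonnegative and satisfies $x_1 + x_2 \leqslant a-c$, so it lies inside the first region and yields a nonnegative payoff, whereas for $x_2 > (a-c)\cos^2\gamma$ the vertex is negative and the parabola is decreasing on $[0,\infty)$, forcing the optimum to $x_1=0$. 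The point requiring care is ruling out deviations into the region $x_1 + x_2 > a$: there the payoff equals $-c(x_1\cos^2\gamma + x_2\sin^2\gamma)$, which is continuous with the first piece at $x_1 + x_2 = a$ and strictly decreasing in $x_1$, so the global maximum is always attained in the first region. This yields
\[
\beta_1(x_2) = \begin{cases} \dfrac{(a-c)\cos^2\gamma - x_2}{2\cos^2\gamma} &\mbox{if}~ x_2 \leqslant (a-c)\cos^2\gamma \\ 0 &\mbox{if}~ x_2 > (a-c)\cos^2\gamma, \end{cases}
\]
together with the symmetric formula for $\beta_2(x_1)$; the residual case $x_2 > a$ is immediate, since then every admissible $x_1$ lies in the second region and $\beta_1(x_2)=0$.

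Then I would intersect the two graphs. On their linear parts $\beta_1$ is the line $x_2 = (a-c)\cos^2\gamma - 2\cos^2\gamma\, x_1$ and $\beta_2$ is the line $x_2 = (a-c)/2 - x_1/(2\cos^2\gamma)$, with slopes $-2\cos^2\gamma$ and $-1/(2\cos^2\gamma)$. These coincide exactly when $4\cos^4\gamma = 1$, i.e. $\gamma = \pi/4$. For $\gamma \ne \pi/4$ the slopes differ, so the linear parts meet in a single point; solving the symmetric system $x_1 = x_2 = x^*$ gives $x^* = (a-c)\cos^2\gamma/(2\cos^2\gamma+1)$, which lies in the admissible range $x^* \leqslant (a-c)\cos^2\gamma$. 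One also checks there is no spurious equilibrium on a zero branch: a profile with $x_1 = 0$ would require $\beta_1((a-c)/2)=0$, i.e. $\cos^2\gamma < 1/2$, which is impossible for $\gamma < \pi/4$. For $\gamma = \pi/4$ both best reply segments collapse onto the single line $x_1 + x_2 = (a-c)/2$, so every profile on it with nonnegative coordinates is a mutual best reply; since the threshold $(a-c)\cos^2\gamma$ then equals $(a-c)/2$, the endpoints are included and the equilibrium set is exactly $\{(x,(a-c)/2-x) : x \in [0,(a-c)/2]\}$.

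The main obstacle will be the careful case analysis in the best reply computation, especially excluding deviations into the zero-price region $x_1 + x_2 > a$ and confirming that the interior critical point never violates $x_1 + x_2 \leqslant a$. Beyond that, the conceptually essential step is recognizing that the genuinely new phenomenon at $\gamma = \pi/4$ is the degeneracy of the two best reply lines: at that value both players receive the identical quantity $(x_1+x_2)/2$, so their payoffs depend only on the sum $x_1 + x_2$, and the isolated Cournot equilibrium dissolves into a one-parameter family of equilibria along the collusive line $x_1 + x_2 = (a-c)/2$.
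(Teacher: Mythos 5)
Your proof is correct and follows essentially the same route as the paper: derive each player's best-reply function by the same case analysis (interior vertex of the parabola versus the corner solution $x_1=0$) and read off the equilibria from the intersection of the two best-reply graphs. The only refinements are your continuity-plus-monotonicity argument for excluding deviations into the region $x_1+x_2>a$ and your explicit slope comparison showing the reply lines are transversal iff $\gamma\ne\pi/4$; the paper handles these with a direct inequality comparison and by solving the linear system (illustrated by a figure), respectively.
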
 
\begin{proof}
The proof is similar in spirit to that of Proposition~\ref{prop1}. Let $x_{2}$ satisfy $x_{2} \leqslant a$. For $x_{2} \leqslant (a-c)\cos^2{\gamma}$ the solution of $\argmax_{q_{1} \in [0,\infty)}u^{Q}_{1}(x_{1},x_{2})$ is $x_{1} = [(a-c)\cos^2{\gamma} - x_{2}]/(2\cos^2{\gamma})$ as a global maximum of function of variable $x_{1}$,
\begin{equation}\label{1part}
\left(x_{1}\cos^2{\gamma} + x_{2}\sin^2{\gamma}\right)(a-c-x_{1} - x_{2}) 
\end{equation}
If $(a-c)\cos^2{\gamma} < x_{2} \leqslant a$ then function~(\ref{1part}) is monotonically increasing on $[0,\infty)$. Hence, in this case
\begin{equation}
\argmax_{\makebox[0pt]{$\scriptstyle x_{1} \in [0,\infty), x_{1} + x_{2} \leqslant a$}}u^{Q}_{1}(x_{1},x_{2}) = \{0\}.
\end{equation}
Since $x_{2} \leqslant a$, we have
\begin{equation}
u^{Q}_{1}(0,x_{2}) = x_{2}(a-x_{2})\sin^2{\gamma} - cx_{2}\sin^2{\gamma} \geqslant -cx_{2}\sin^2{\gamma} > -c\left(x_{1}\cos^2{\gamma} + x_{2}\sin^2{\gamma}\right)
\end{equation}
for each $x_{1} > 0$. It implies that
\begin{equation}
\argmax_{\makebox[0pt]{$\scriptstyle x_{1} \in [0,\infty), x_{1} + x_{2} \leqslant a$}}u^{Q}_{1}(x_{1},x_{2}) = \argmax_{\makebox[0pt]{$\scriptstyle x_{1} \in [0,\infty)$}}u^{Q}_{1}(x_{1},x_{2}).
\end{equation}
If $x_{2} > a$, then $u^{Q}_{1}(x_{1}, x_{2}) = -c\left(x_{1}\cos^2{\gamma} + x_{2}\sin^2{\gamma}\right)$. Hence, it is optimal for player 1 to choose $x_{1} = 0$. Summarizing, we have the following form of $\beta_{1}(x_{2})$: 
\begin{equation}\label{bestresponse3}
\beta_{1}(x_{2}) = \begin{cases}\frac{(a-c)\cos^2{\gamma} - x_{2}}{2\cos^2{\gamma}} &\mbox{if}~0\leqslant x_{2} \leqslant (a-c)\cos^2{\gamma} \\ 0 &\mbox{if}~x_{2} > (a-c)\cos^2{\gamma}.\end{cases}
\end{equation}
Similar arguments to those above show that
\begin{equation}\label{bestresponse4}
\beta_{2}(x_{1}) = \begin{cases}\frac{(a-c)\cos^2{\gamma} - x_{1}}{2\cos^2{\gamma}} &\mbox{if}~0\leqslant x_{1} \leqslant (a-c)\cos^2{\gamma} \\ 0 &\mbox{if}~x_{1} > (a-c)\cos^2{\gamma}.\end{cases}
\end{equation}
The best response function $\beta_{1}(x_{2})$ and $\beta_{2}(x_{1})$ are given in Fig~\ref{figure2}. 
\begin{figure}[t]
\centering
\includegraphics[scale=1]{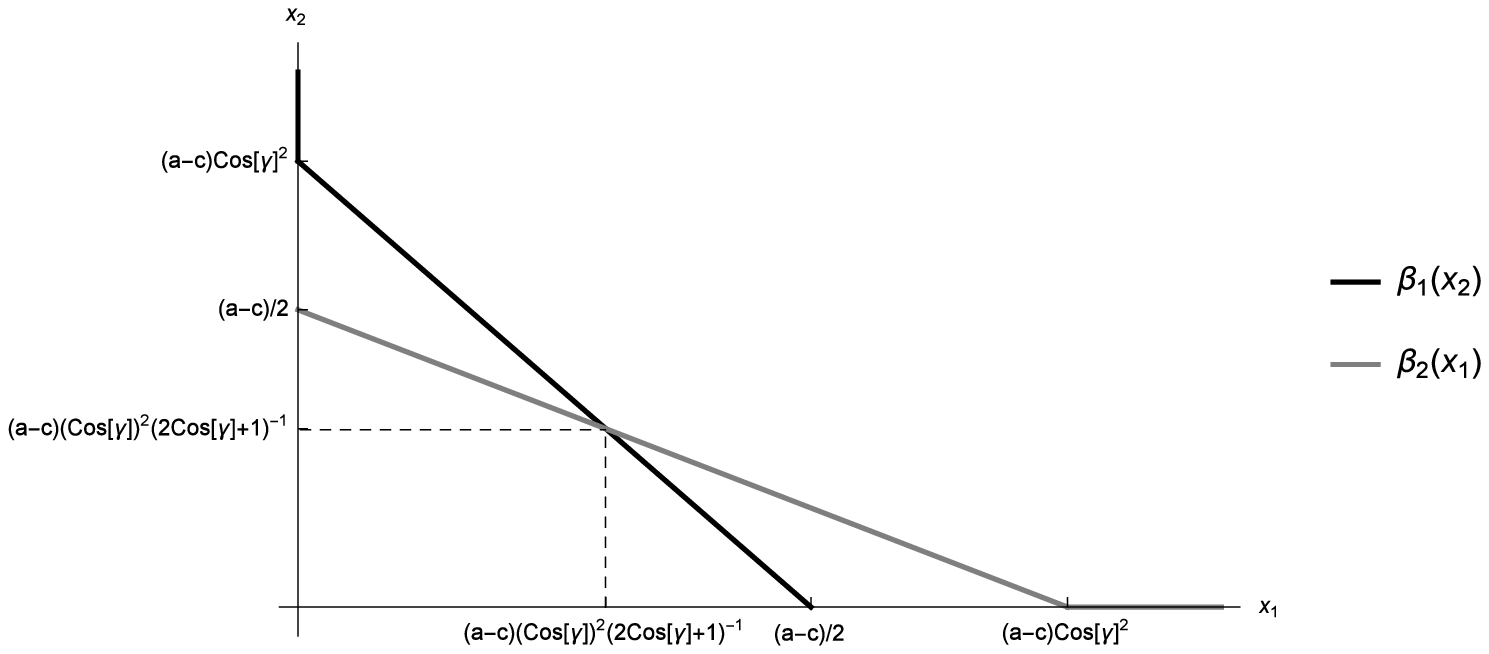}
\caption{The graphs of the best reply functions $\beta_{1}(x_{2})$ and $\beta_{2}(x_{1})$ given by~(\ref{bestresponse3}) and~(\ref{bestresponse4}). The intersection represents the unique Nash equilibrium in the game.\label{figure2}}
\end{figure}
Solving the system of equations determined by $\beta_{1}(x_{2})$ and $\beta_{2}(x_{1})$,
\begin{equation}
\begin{cases} x_{1} = \frac{(a-c)\cos^2{\gamma} - x_{2}}{2\cos^2{\gamma}} \\ x_{2} = \frac{(a-c)\cos^2{\gamma} - x_{1}}{2\cos^2{\gamma}}\end{cases}
\end{equation}
we obtain for $\gamma \ne \pi/4$ the unique solution $(x^*_{1}, x^*_{2})$, $x^*_{1} = x^*_{2} = [(a-c)\cos^2{\gamma}]/(2\cos^2{\gamma} + 1)$. If $\gamma = \pi/4$, there are infinitely many solutions $(x^*_{1},x^*_{2})$ such that $x^*_{1} + x^*_{2} = (a-c)/2$. This completes the proof.
\end{proof}
\begin{figure}[t]
\centering
\includegraphics[scale=0.9]{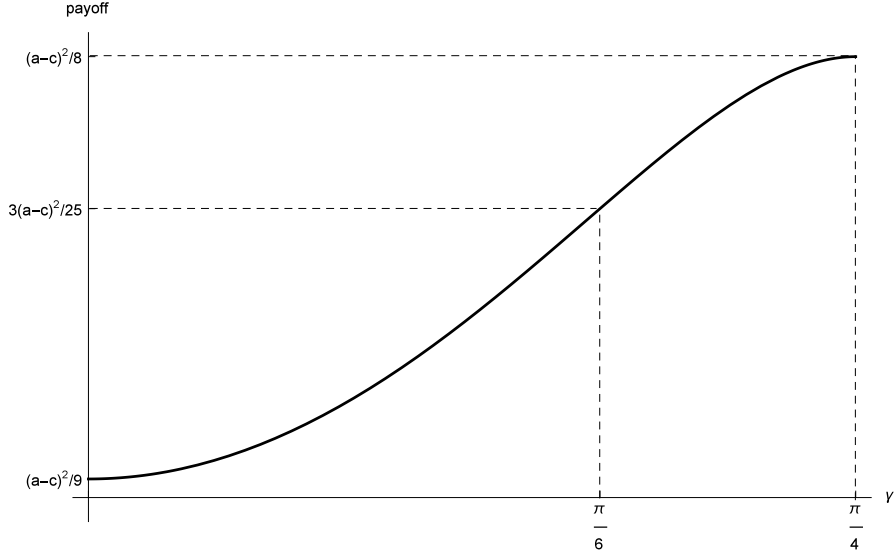}
\caption{The equilibrium payoff~(\ref{equilibriumpayoff}) depending on the entanglement parameter $\gamma$. \label{figure3}}
\end{figure}
The payoff corresponding to the Nash equilibrium depends on $\gamma$ and is given by
\begin{equation}\label{equilibriumpayoff}
u^{Q}_{i}(x^*_{1}, x^*_{2}) = \frac{(a-c)^2\cos^2{\gamma}}{2\cos^2{\gamma}+1},~\gamma \in \left[0,\frac{\pi}{4}\right].
\end{equation}
In particular, if the final state $|\Psi_{\mathrm{fin}}\rangle$ is maximally entangled ($\gamma = \pi/4$), the resulting equilibrium payoff $u^{Q}_{i}(x^*_{1}, x^*_{2})$ is Pareto optimal and equal to $(a-c)^2/8$.
\section{Conclusions}
The theory of quantum games has no rigorous mathematical structure. There is no formal axioms, definitions that would give clear directions of how a quantum game ought to look like. In fact, only one condition is taken into consideration. It says that a quantum game ought to include the classical way of playing the game. As a result, this allows one to define a quantum game scheme in many different ways. The schemes we have studied in section~\ref{section3} are definitely ingenious. They make a significant contribution to quantum game theory. Our work has shown which of the two schemes (the Iqbal-Toor scheme or the Li-Du-Masasr scheme) might be considered more reasonable. The payoffs in the game determined by the Iqbal-Toor scheme can go beyond the classical set of feasible payoffs compared with the Li-Du-Masasr scheme. Therefore, one may question whether the former scheme outputs the game played in a quantum manner or just another classical game. It might not mean that the latter scheme with (\ref{poprawionypayoff}) gives the definitive form of the quantum Cournot duopoly. At the end of the paper we defined another scheme in terms of quantum theory.

We can conclude that the question of quantum duopoly is still an open problem even in the simplest case of the Cournot duopoly. It is definitely worth investigating as other new schemes may bring us closer to specify a strict definition of a quantum game. 

\end{document}